\newtheorem{theorem}{Theorem}
\newtheorem{lemma}{Lemma}
\newenvironment{proof}{{\it Proof:}}{\hfill $\blacksquare$\par}
\begin{document}
	\title{A Practical Consideration on Convex Mutual Information}
	\author{
		Mingxi Yin, 
		Bingli Jiao, \emph{Senior Member}, \emph{IEEE},	
		Dongsheng Zheng	
		and Yuli Yang, \emph{Senior Member}, \emph{IEEE}
		\thanks{M. Yin and B. Jiao ({\em corresponding author}) are with the Department of Electronics, Peking University, Beijing 100871, China (e-mail: yinmx@pku.edu.cn, jiaobl@pku.edu.cn).}
		\thanks{Y. Yang is with the School of Engineering, University of Lincoln, Lincoln, U.K. (e-mail: yyang@lincoln.ac.uk).}
	}
	
	\maketitle
	
	\begin{abstract}
	 In this paper, we focus on the convex mutual information, which was found at the lowest level split in multilevel coding schemes with communications over the additive white Gaussian noise (AWGN) channel. Theoretical analysis shows that communication achievable rates (ARs) do not necessarily below mutual information in the convex region.  In addition, simulation results are provided as an evidence.
		
	\end{abstract}
	
	\begin{IEEEkeywords}
		Convex, mutual information,  multilevel coding. 
	\end{IEEEkeywords}
	
	\IEEEpeerreviewmaketitle

\section{Convex Mutual Information}
The study of mutual information (MI) refers to entropies of the channel input and the noise~\cite{Shannon1948}.  This research refers to those involving transmissions of the multilevel coding schemes over the memoryless additive white Gaussian noise (AWGN) channel, whereat the overall MI is separated with respect to split signals~\cite{MLC_1977,MLC_1999,MLC_2000,MLC_2002,MLC_2007}. We restrict ourselves to work on the convex MI which can be found in the previous works~\cite{MLC_1999,MLC_2000,MLC_2007}. 

In multilevel coding schemes, there are several approaches to split the bit-to-symbol mapping of one modulated symbol into different levels, each of which constructs a coded modulation problem based on their individual MIs. The overall achievable rate (AR) is obtained by the summation of ARs of all levels and is limited by the MI~\cite{MLC_1999}. With this constraint, In exchanging for splitting out a higher MI of one level, those of the other levels must be lower.  Consequently, the convex MI 
presents with the lowest level at ow signal-to-noise ratio (SNR).     

Since the convex MI suppress the lowest level for its AR to an insignificant contribution, it makes sense to circumvent this upper bound.  Hence, we utilize a repeated transmission method to improve the lowest level. Because that MI of every level is approximately in straight line at low SNR \cite{Verdu2011}, the negative effect of the repetition is small.
         
For showing the problem explicitly, we split the Quadrature Phase-Shift Keying (QPSK) constellation for providing an example of convex MI in the multilevel coding scheme.
The QPSK constellation is split into two levels in complex plan: a low level and a high level.  The two constellations are shown in Fig. \ref{fig1}, where the low level mapping is shown in Fig. \ref{fig1b} and the high level mapping is shown in Fig. \ref{fig1c}.  
As shown in Fig. \ref{fig1a}, the alphabet for the QPSK constellation can be denoted by ${\cal Q}{\rm{ = }}\{ {q_1},{q_2},{q_3},{q_4}\} $, where ${q_1} = A + j0$, ${q_2} = 0 + jA$, ${q_3} =  - A + j0$, ${q_4} = 0 - jA$. Denote the binary information sequence by the $1 \times 2$ vector ${\bf{v}} = [{v^H},{v^L}]$, ${v^H},{v^L} \in \{ 0,1\} $.

\begin{figure}
	\centering
	\subfigure[Actual channel]{
		\includegraphics[width=0.42\textwidth]{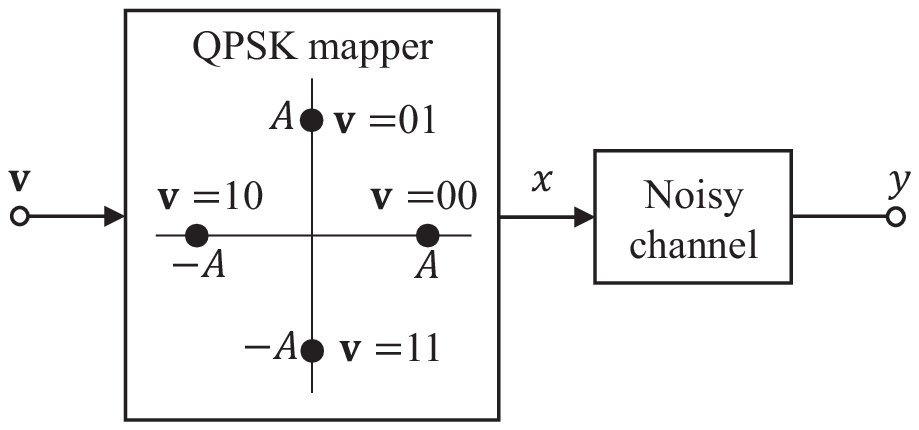}
		\label{fig1a}}
	\subfigure[Equivalent channel for the low level]{
		\includegraphics[width=0.42\textwidth]{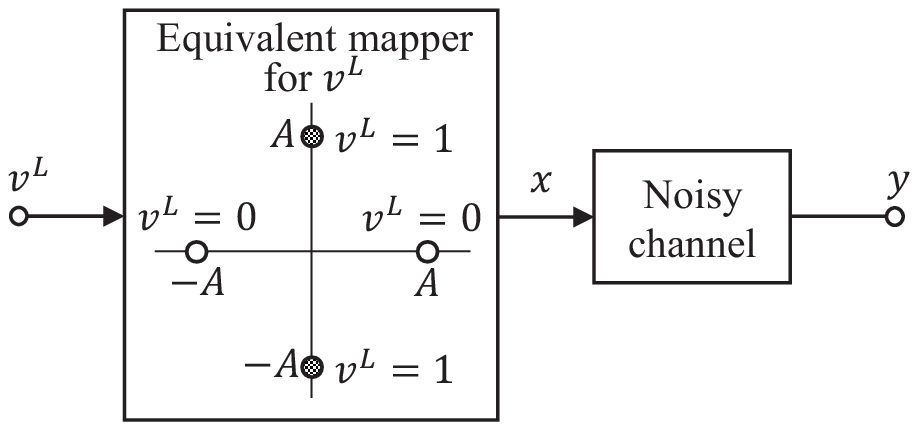}
		\label{fig1b}}
	\subfigure[Equivalent channel for the high level]{
		\includegraphics[width=0.42\textwidth]{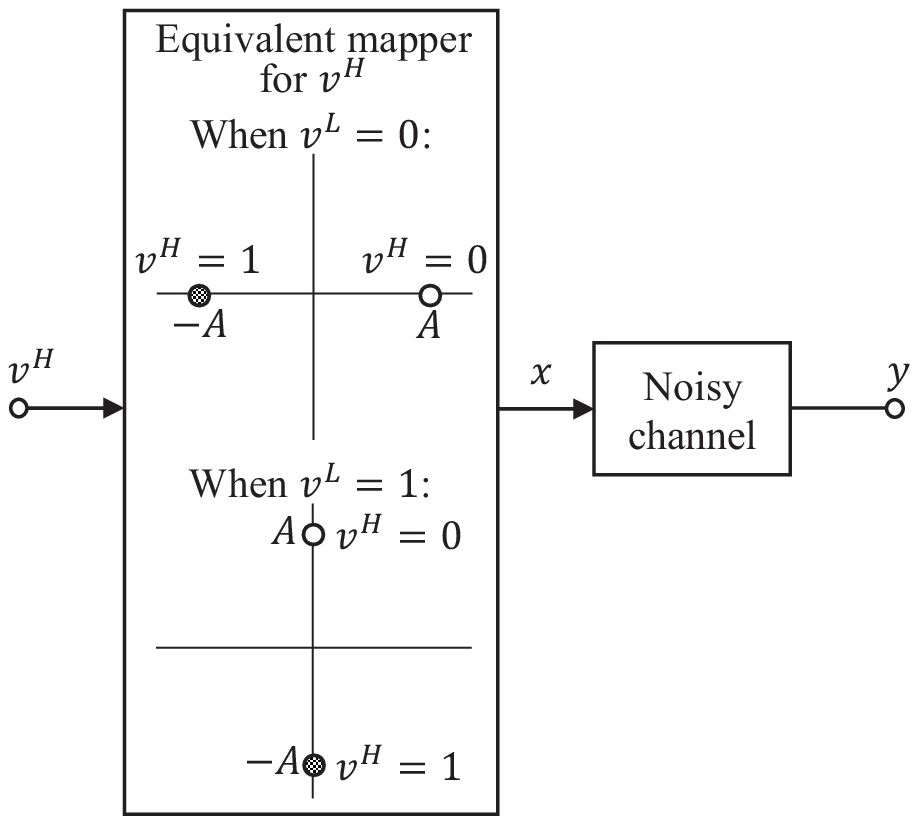}
		\label{fig1c}}
	\caption{Equivalent channels for multilevel coded QPSK modulation. }
	\label{fig1}
\end{figure}

The design of MLC for the QPSK modulation is explained as follows.
In the first step, at the low level, the signal set ${\cal Q}$ is divided into two parts, namely, the subsets ${\cal Q}({v^L} = 0) = \{ {q_1},{q_3}\} $ and ${\cal Q}({v^L} = 1) = \{ {q_2},{q_4}\} $. Each subset at the low level is uniquely labeled by the
path ${v^L}$. Then each of these two subset are divided into two further subsets ${\cal Q}({v^L},{v^H} = 0)$ and ${\cal Q}({v^L},{v^H} = 1)$ at the high level, and each subset at this level is uniquely labeled by the
path ${v^L}{v^H}$. At the high level of QPSK each subset only contains one signal point, concretely, we have ${\cal Q}({v^L} = 0,{v^H} = 0) = \{ {q_1}\} $,
${\cal Q}({v^L} = 0,{v^H} = 1) = \{ {q_3}\} $, ${\cal Q}({v^L} = 1,{v^H} = 0) = \{ {q_2}\} $,  ${\cal Q}({v^L} = 1,{v^H} = 1) = \{ {q_4}\} $.
Therefore, by $[{v^H},{v^L}]$ the transmit signal can by obtained by $x={\cal Q}({v^L},{v^H})$. Over the AWGN channel, the receive signal is given by 
\begin{equation}\label{1-1}
y = x + n 
\end{equation}
which can be modelled by the separated low level and high level channel as shown in Fig. \ref{fig1}. At the low level, $x \in \{-jA,jA\}$ for $v^L = 1$ and  $x \in \{-A,A\} $ for $v^L=0 $, respectively, where $j=\sqrt{-1}$, $x$ is the signal in Euclidean space and $v^L$ is the bit in Hamming space at the low level of the separation, $n \sim \mathcal{CN}(0,\sigma^2)$ denotes the AWGN and $y$ denotes the channel output.

MI of the low level, i.e., MI between the low level information bit $v^L$ and the receive signal $y$, is calculated by~\cite{Shannon1948}
\begin{eqnarray}\label{1-2}
	\begin{aligned}
		I(V^L,Y) =& \int_{ - \infty }^\infty    \sum\limits_x  \Bigg[ P_{V^L}(V^L)P_{Y|V^L}(y|v^L) \Biggr. \\
		& \qquad \qquad \times \Biggl. {\log _2}\frac{P_{V^LY}(v^L,y)}{P_{V^L}(v^L){P_{Y}}(y)} \Bigg] {dy} \\
		=& {\log _2}2 - \frac{1}{2}\int_{ - \infty }^\infty  \sum\limits_{d = 1}^2  \Bigg[ P_{Y|V^L_d}(y|v^L_d) \Biggr.\\ 
		& \qquad \qquad \times \Biggl. \log _2\frac{\sum\limits_{k = 1}^2 P_{Y|V^L_k}(y|{v^L_k}) }{P_{Y|V^L_d}(y|{v^L_d})} \Bigg]  {dy} 
	\end{aligned}
\end{eqnarray}
where $V^L$ and $Y$ are random variables for the information bit of low level, i.e., $v^L$, the channel output $y$, respectively, $P(\cdot)$ is density function of the probability, and $n$ is the AWGN with $n \sim \mathcal{N}(0,\sigma^2)$, respectively. The conditional probabilities $P_{Y|V^L_1}(y|{v^L_1})$ and $P_{Y|V^L_2}(y|{v^L_2})$ used in \eqref{1-2} are given by
\begin{equation}\label{pv1}
	\begin{aligned}
		&P_{Y|v^L_1}(y|{v^L_1}) = P_{Y|V^L=0}(y|{v^L=0})=\\ 
		&\frac{1}{{2\pi \sigma ^2}}\left( {\exp \left( { - \frac{{{{\left\| {{y} - {q_1}} \right\|}^2}}}{{\sigma ^2}}} \right) + \exp \left( { - \frac{{{{\left\| {{y} - {q_3}} \right\|}^2}}}{{\sigma ^2}}} \right)} \right)
	\end{aligned}
\end{equation}
and
\begin{equation}\label{pv2}
	\begin{aligned}
		&P_{Y|V^L_2}(y|{v^L_2}) = P_{Y|V^L=1}(y|{v^L=1})=\\
		&\frac{1}{{2\pi \sigma ^2}}\left( {\exp \left( { - \frac{{{{\left\| {{y} - {q_2}} \right\|}^2}}}{{\sigma ^2}}} \right) + \exp \left( { - \frac{{{{\left\| {{y} - {q_4}} \right\|}^2}}}{{\sigma ^2}}} \right)} \right)
	\end{aligned}
\end{equation}
respectively.

Then \eqref{1-2} can be derived as a function of the given SNR $\gamma$ as
\begin{equation}\label{eq_QMI_L}
	\begin{array}{l}
		\mathcal{I}^L(\gamma)= 1 \\
		-\frac{1}{4}{\mathbb{E}_W}\left[ {\log \left( {1 + \frac{{f\left( {W, - \sqrt \gamma   + j\sqrt \gamma  } \right) + f\left( {W, - \sqrt \gamma   - j\sqrt \gamma  } \right)}}{{f\left( {W,0} \right) + f\left( {W, - 2\sqrt \gamma  } \right)}}} \right)} \right]\\
		{\rm{                   }} - \frac{1}{4}{\mathbb{E}_W}\left[ {\log \left( {1 + \frac{{f\left( {W,\sqrt \gamma   + j\sqrt \gamma  } \right) + f\left( {W,\sqrt \gamma   - j\sqrt \gamma  } \right)}}{{f\left( {W,2\sqrt \gamma  } \right) + f\left( {W,0} \right)}}} \right)} \right]\\
		{\rm{                   }} - \frac{1}{4}{\mathbb{E}_W}\left[ {\log \left( {1 + \frac{{f\left( {W,\sqrt \gamma   - j\sqrt \gamma  } \right) + f\left( {W, - \sqrt \gamma   - j\sqrt \gamma  } \right)}}{{f\left( {W,0} \right) + f\left( {W, - j2\sqrt \gamma  } \right)}}} \right)} \right]\\
		{\rm{                   }} - \frac{1}{4}{\mathbb{E}_W}\left[ {\log \left( {1 + \frac{{f\left( {W,\sqrt \gamma   + j\sqrt \gamma  } \right) + f\left( {W, - \sqrt \gamma   + j\sqrt \gamma  } \right)}}{{f\left( {W,j2\sqrt \gamma  } \right) + f\left( {W,0} \right)}}} \right)} \right]
	\end{array}
\end{equation}
where ${\mathbb E}[\cdot]$ denotes the expectation operator, $W \sim \mathcal{CN}(0,1)$ denotes the random variable of AWGN, and the function $f(W,a)$ in \eqref{eq_QMI_L} is given by
\begin{equation}\label{1-10}
	f(W,a) = e^{- {\left( W - a \right)}^2} 
\end{equation} .
	
By scaling SNR in the linear manner, numerical results of \eqref{eq_QMI_L} are plotted as a function of linear SNR in Fig. 2, where the convexity of MI for the low level can be found in the SNR region of $\gamma$ in $[0,1.5]$. 

\begin{figure}
	\centering
	\includegraphics[width=0.5\textwidth]{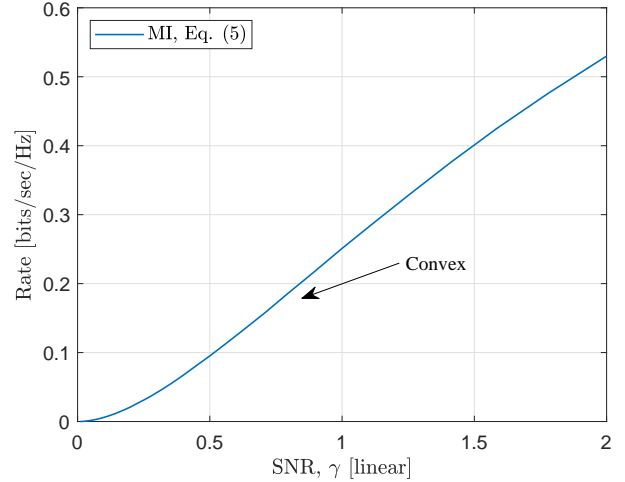}
	\caption{MI for the low level. }
	\label{fig_MI}
\end{figure}

To work on the convex problem mathematically, the definition of convex function is recalled as follows.

For $x_b>x_a$, if 
\begin{equation}\label{1-3}
f(x) < y=\frac{y_b-y_a}{x_b-x_a}x
\end{equation}
holds, $f(x)$ is convex with $x \in [x_a, x_b]$, where $x_b$ and $x_a$ are two arguments at horizontal axis and $y$ a straight line.

We then prove that the AR of low level is higher than the MI of low level in the convex region of MI.    

\section{Theoretical Analysis and Simulation Confirmation}
In this section, the theoretical work uses the definition of AR in the concept of the transmit bit rate at ``arbitrary small'' error probability, and the simulation uses bit error rate (BER) of $10^{-6}$ as the approximation of the ``arbitrary small'' with the AR. 

The theoretical proof is given in the following subsection and the simulation results are presented next.

\subsection{Theoretical Proof}
\begin{lemma}
For a given signal modulation, the error probability of the transmission keeps unchanged when 
\begin{equation}\label{2-1-1}
\mathcal{R}(\gamma) = \kappa \gamma
\end{equation}
for $\gamma= \gamma_1/M$, where $M$ is a positive integer, $\mathcal{R}(\gamma)$, $\kappa$ and $\gamma$ are the transmission rate of information bits, a constant and the SNR, and $\gamma_1$ is the SNR at $M=1$, respectively.     
\end{lemma}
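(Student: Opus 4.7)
The plan is to establish this by a repetition coding argument: show that a transmission at reduced SNR $\gamma=\gamma_1/M$ with scaled rate $\mathcal{R}(\gamma)=\kappa\gamma$ can be realized as an $M$-fold repetition of the baseline scheme at SNR $\gamma_1$ with rate $\mathcal{R}(\gamma_1)=\kappa\gamma_1$, so the two systems have identical decoding statistics and hence identical error probability.

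First I would fix the baseline. For the given modulation with symbol energy $E_s$, noise variance $\sigma_1^2$, and per-channel-use SNR $\gamma_1=E_s/\sigma_1^2$, let $P_e$ denote the error probability attained at rate $\mathcal{R}(\gamma_1)=\kappa\gamma_1$. Next I would consider a second system using the same modulation at noise variance $\sigma^2=M\sigma_1^2$, so that $\gamma=E_s/\sigma^2=\gamma_1/M$. In this system each information symbol is transmitted $M$ times, giving $M$ independent observations $y_m=x+n_m$ with $n_m\sim\mathcal{CN}(0,\sigma^2)$ i.i.d. Forming the sufficient statistic $\tilde{y}=\tfrac{1}{M}\sum_{m=1}^{M} y_m=x+\tilde{n}$ with $\tilde{n}\sim\mathcal{CN}(0,\sigma^2/M)=\mathcal{CN}(0,\sigma_1^2)$, the combined channel is statistically identical to the baseline channel at SNR $\gamma_1$.

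Therefore, feeding $\tilde{y}$ into the same decoder used in the baseline reproduces the same error probability $P_e$. It remains to track the rate. The $M$-fold repetition uses $M$ channel uses per information symbol, so if the baseline carries $\mathcal{R}(\gamma_1)=\kappa\gamma_1$ information bits per channel use, the repeated system carries $\mathcal{R}(\gamma_1)/M=\kappa\gamma_1/M=\kappa\gamma$ bits per channel use, which is exactly $\mathcal{R}(\gamma)=\kappa\gamma$. Combining the two observations yields the claim for every positive integer $M$.

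The only non-routine step is the combining/sufficiency argument: one must verify that the linear combiner (equivalently, maximum-ratio combining for equal-gain branches) produces a statistic whose conditional distribution given $x$ matches the baseline channel law, so that any achievability result for the baseline transfers verbatim. This is immediate in AWGN because $\tilde{n}$ remains circularly symmetric complex Gaussian with variance $\sigma_1^2$; the only care needed is to insist that the $M$ noise realizations are independent (which is the memoryless AWGN assumption in \eqref{1-1}) and that the modulation alphabet, and hence $E_s$, is held fixed across the two scenarios.
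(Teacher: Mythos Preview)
Your proposal is correct and follows essentially the same repetition-coding argument as the paper: transmit each symbol $M$ times at per-use SNR $\gamma_1/M$, combine linearly at the receiver to restore an effective SNR of $\gamma_1$, and note that the rate is thereby scaled by $1/M$. The only cosmetic difference is that the paper achieves $\gamma=\gamma_1/M$ by scaling the transmit amplitude ($x'=x/\sqrt{M}$ with the noise variance held fixed) and then sums the $M$ outputs, whereas you keep the constellation fixed, enlarge the noise variance, and average; these yield the same sufficient statistic and the same conclusion.
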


\begin{proof}
Let us work in AWGN channel model   
\begin{equation}\label{2-1-2}
y= x+n
\end{equation}
with $y$, $x$ and $n$ are the channel output, the channel input and the AWGN component, respectively, where $n\sim \mathcal{N}(0,{\sigma ^2})$.

Let $ x'= x/\sqrt{M}$, repeating $x'$ for $M$ times and input all repeated $x'$ into \eqref{2-1-2} changes \eqref{2-1-2} in the vector form   
\begin{equation}\label{2-1-3}
\begin{aligned}
[y'_1,y'_2,&\cdots,y'_m,\cdots,y'_M] =\\
& \underbrace {[x', \cdots ,x',{\rm{ }}x']}_M + [n'_1,n'_2,\cdots,n'_m,\cdots,n'_M]
\end{aligned}
\end{equation}
for $m=1,2,\cdots,M$, where $y'_m$ and $n_m$ are the $m^{th}$ components of channel output and the associate Gaussian noise, respectively.  We note that $n'_m$ is statistically the same noise as that in \eqref{2-1-2}.  

At the channel output, the receiver sums over all components and obtain the demodulation equation as 
\begin{equation}\label{2-1-4}
y' = Mx' + \sum_m{n'_m} 
\end{equation}
where $y'$ is the result of the summation and $Mx'$  is the demodulated signal.  

Since the SNRs in \eqref{2-1-4} and \eqref{2-1-2} are same, the error probability of the former is exactly same as that of the latter with, however, its rate reduction of factor $1/M$.  Thus, one can find that \eqref{2-1-1} holds in general.   
\end{proof}

\begin{theorem} 
ARs of the repeated low level can be located in a straight line geometrically drawn from zero to a point of MI. Thus, there must exist a AR larger than the MI in the convex region defined by \eqref{1-3}.  
\begin{equation}\label{2-1}
\mathcal{R}^L(\gamma) = \frac{\mathcal{R}^L(\gamma_1)}{\gamma_1} \gamma 
\end{equation} 
for $\gamma = \gamma_1/M$, where $M$ is a positive integer number, $\mathcal{R}^L(\gamma)$ is the function for the AR of low level at SNR $\gamma$.  

\end{theorem}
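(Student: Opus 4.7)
The plan is to derive the stated linear scaling of the low-level AR as a direct corollary of Lemma~1, and then to combine it with the convex-function inequality \eqref{1-3} to conclude that this chord lies strictly above $\mathcal{I}^L$ on the convex region identified in Fig.~\ref{fig_MI}.

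First I would fix a reference SNR $\gamma_1$ in the convex region (so $\gamma_1$ close to $1.5$) at which the low-level coded-modulation scheme operates at rate $\mathcal{R}^L(\gamma_1)$ with arbitrarily small error probability. Applying Lemma~1 to the low-level equivalent channel of Fig.~\ref{fig1b}, I would scale each transmit symbol by $1/\sqrt{M}$ and repeat it $M$ times, so that the per-symbol SNR drops from $\gamma_1$ to $\gamma = \gamma_1/M$, while the sufficient statistic obtained by summing the $M$ noisy observations has the same effective SNR as the original unrepeated transmission at $\gamma_1$. The Lemma then guarantees that the error probability is preserved, whereas the information rate per channel use is divided by $M$. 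This immediately yields $\mathcal{R}^L(\gamma_1/M) = \mathcal{R}^L(\gamma_1)/M$, which is precisely \eqref{2-1} with slope $\kappa = \mathcal{R}^L(\gamma_1)/\gamma_1$, and shows that the operating points $(\gamma_1/M,\mathcal{R}^L(\gamma_1)/M)$ all sit on the ray from the origin through $(\gamma_1,\mathcal{R}^L(\gamma_1))$.

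The second step is to invoke convexity on top of this scaling law. I would take $\mathcal{R}^L(\gamma_1)$ arbitrarily close to $\mathcal{I}^L(\gamma_1)$, which is admissible by the achievability direction of the channel coding theorem. Substituting $x_a=0$, $x_b=\gamma_1$, $y_a=0$, $y_b=\mathcal{I}^L(\gamma_1)$ into the convexity inequality \eqref{1-3}, and using $\mathcal{I}^L(0)=0$, gives $\mathcal{I}^L(\gamma) < (\mathcal{I}^L(\gamma_1)/\gamma_1)\,\gamma$ for every $\gamma \in (0,\gamma_1)$. Combined with the scaling law, for all sufficiently large $M$ one obtains $\mathcal{R}^L(\gamma_1/M) > \mathcal{I}^L(\gamma_1/M)$, which is exactly the violation of the informal ``MI upper bound'' asserted by the theorem.

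The subtlety I expect to be the main obstacle is the interpretation of ``achievable rate''. Lemma~1 argues at the level of the error probability of a fixed symbol-wise signalling scheme under repetition, whereas a Shannon AR is defined through a block-length limit. To make the argument clean I would either (i) pin $\mathcal{R}^L(\gamma_1)$ down as the rate of a specific finite-length code whose BER already meets the ``arbitrarily small'' threshold adopted in Section~II, so that Lemma~1 transports a BER rather than a limit, or (ii) regard the $M$-fold repetition as an inner code nested inside an outer coded-modulation scheme whose block length is then sent to infinity. Either route leaves the algebraic core unchanged; the real care lies in respecting how the rate metric is defined, and in ensuring that $\mathcal{R}^L(\gamma_1)$ can be taken strictly below but arbitrarily close to $\mathcal{I}^L(\gamma_1)$ so that the strict inequality in \eqref{1-3} survives the approximation.
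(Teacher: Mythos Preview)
Your proposal is correct and follows essentially the same route as the paper: invoke Shannon achievability to set $\mathcal{R}^L(\gamma_1)\approx\mathcal{I}^L(\gamma_1)$, then apply Lemma~1 to obtain the linear scaling \eqref{2-1}, and finally compare the resulting chord to the convex curve via \eqref{1-3}. If anything, your argument is more complete, since the paper's formal proof stops after deriving the chord and relegates the convexity comparison $\mathcal{I}^L(\gamma_M)<(\mathcal{I}^L(\gamma_1)/\gamma_1)\gamma_M$ to a numerical discussion outside the proof environment; your explicit treatment of the ``arbitrarily close to MI'' approximation and the definition of AR is also more careful than what the paper provides.
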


\begin{proof}
	According to Shannon theory, there exist the capacity achieving codes that allows the AR of the low level to approach the MI calculated in \eqref{eq_QMI_L} at a negligible gap, whereby we write the approximation as
	\begin{equation}\label{2-1-8}
	\mathcal{R}^L (\gamma_1) = I(V^L;Y)|_{\gamma=\gamma_1} = \mathcal{I}^L(\gamma_1)
    \end{equation}	
	where $I(V^L,Y)|_{\gamma=\gamma_1}$ denotes the MI of low level at $\gamma=\gamma_1$.
	
	Using Lemma 1 to $\mathcal{R}^L$ yields   
	\begin{equation}\label{2-8}
	\mathcal{R}^L(\gamma_M)=\frac{\mathcal{R}^L(\gamma_1)}{\gamma_1} \gamma_M =\frac{\mathcal{I}^L(\gamma_1)}{\gamma_1} \gamma_M 
	\end{equation}
	
\end{proof}
    
    The numerical results of \eqref{2-8} are plotted for $M = 1, 2, 4, 8 $ in Fig. \ref{fig_AR} to provide an intuitive view of
    \begin{equation}\label{eq_R_M}
    	\mathcal{R}^L(\gamma_M)>\mathcal{I}^L(\gamma_M)
    \end{equation} 
    when $\gamma_M$ in the convex region $[0,1.5]$, due to that Fig. \ref{fig_MI} shows that in this region 
    \begin{equation}\label{eq_I_M}
    	\mathcal{I}^L(\gamma_M) < \frac{\mathcal{I}^L(\gamma_1)}{\gamma_1} \gamma_M 
    \end{equation} 
	
    Finally, since \eqref{2-8} holds for the relationship between the MI of a signal modulation and the AR of the repeated modulate signal in general, there is an insignificant difference between the MI and AR when MI curve is geometrically close to a straight line.  It is lucky that the MI of higher level agrees with the straight line at low SNR~\cite[Theorem 1]{Verdu2011}. This issue would be in our future interests.

\subsection{Practical Simulation}
 	To confirm the theoretical proof in the above section, simulations are performed using MATLAB.

 	LDPC codes with the code length of 64800 are selected from DVB-S.2 standard to simulate the BER performance of the low level as given in \eqref{1-1}, which is obtained by splitting the QPSK constellation. In simulations, different code rates are adopted for searching simulated SNR denoted by $\hat \gamma_1$ for BER at $10^{-6}$.  
 	
 	In decoding procedures of the low level, soft decision based on log–likelihood ratios (LLRs) is used in the signal demodulation, given by
 	\begin{eqnarray}\label{eq_LLR} 
 	\begin{aligned}
 	&{\rm{LLR}}^L =\ln \frac{\displaystyle {\sum\limits_{v^L = 0} {\exp \left( { -   \frac{{{{\left\| {y -  x } \right\|}^2}}}{{\sigma^2}}} \right)} }}{\displaystyle {\sum\limits_{v^L = 1} {\exp \left( { -  \frac{{{{\left\| {y - x} \right\|}^2}}}{{\sigma^2}}} \right)} }} \\
 	&=\ln \frac{\displaystyle {{\exp \left( { -   \frac{{{{\left\| {y -  A } \right\|}^2}}}{{\sigma^2}}} \right)} + {\exp \left( { -   \frac{{{{\left\| {y + A } \right\|}^2}}}{{\sigma^2}}} \right)} }}{\displaystyle  {\exp \left( { -  \frac{{{{\left\| {y - jA} \right\|}^2}}}{{\sigma^2}}} \right)}+{\exp \left( { -  \frac{{{{\left\| {y + jA} \right\|}^2}}}{{\sigma^2}}} \right)} } .
 	\end{aligned}
 	\end{eqnarray}
 The LLR of the high level is given by
 \begin{equation}\label{4-2}
 	{{\rm{LLR}}^H} = \left\{ \begin{array}{l}
 		\ln \frac{{\exp \left( { - \frac{{{{\left\| {y - {A}} \right\|}^2}}}{{\sigma ^2}}} \right)}}{{\exp \left( { - \frac{{{{\left\| {y +A} \right\|}^2}}}{{\sigma ^2}}} \right)}},\ {{\rm{LLR}}^L} > 0\\
 		\ln \frac{{\exp \left( { - \frac{{{{\left\| {y - jA} \right\|}^2}}}{{\sigma ^2}}} \right)}}{{\exp \left( { - \frac{{{{\left\| {y +jA} \right\|}^2}}}{{\sigma ^2}}} \right)}},\ {{\rm{LLR}}^L} \le 0
 	\end{array} \right.
 \end{equation}
 	
 	In simulations, we first set a code rate $R_1=1/2$ for the low level and find $\hat \gamma_1 =2.10 = 3.23$dB at which the BER of the low level achieves $10^{-6}$ as shown in Fig. 4.  Then, we set $ R_M = 1/4, 1/8, 1/16$ (according to $M=2,4,8$ in \eqref{2-8}) and obtained BER results are shown in Fig. 4 as well.  
 	
 	By picking up all SNRs with respect to the BER at $10^{-6}$ for the cases that $M=1,2,4,8$, we plot these points and the responding ARs in Fig. 3, for the comparison with theoretical results of MI. In this figure, two points of simulated ARs, i.e., ARs for the cases that $M=4,8$, are beyond the curve of the convex MI.  SNR gains between simulated ARs and the convex MI are found at $0.57$dB when $M=4$, and $1.62$dB when $M=8$.

 	\begin{figure}
 		\centering
 		\includegraphics[width=0.5\textwidth]{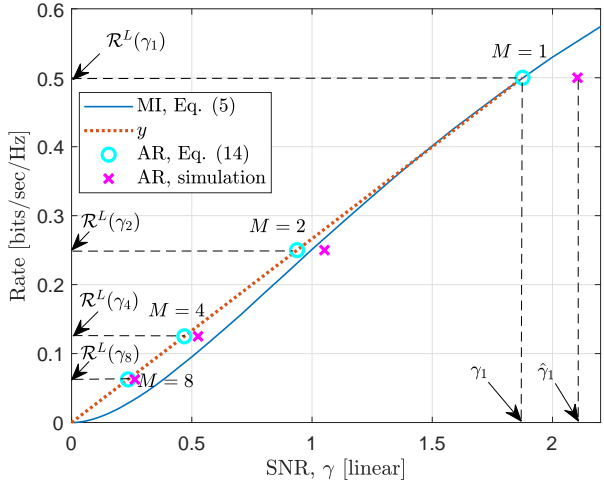}
 		\caption{Simulated ARs for the low level.  }
 		\label{fig_AR}
 	\end{figure}

 	\begin{figure}
 		\centering
 		\includegraphics[width=0.5\textwidth]{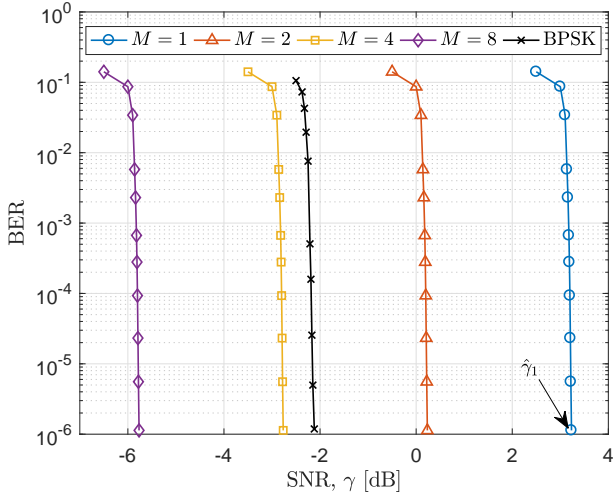}
 		\caption{BER performance for the low level. }
 		\label{fig_BER}
 	\end{figure}
 
\section{Conclusion}
The present work shows that ARs of the low level in multilevel coding are not necessarily limited by the convex MI with split signals. In the theoretical work, we have proved that ARs of repeated signals of the low level can be in a straight line when approaching zero SNR instead of along with the convex MI of the low level.  Simulation results have confirmed the straight line of the AR and its possibility beyond the MI.

\end{document}